\providecommand{\U}[1]{\protect\rule{.1in}{.1in}}
\newtheorem{theorem}{Theorem}
\newtheorem{definition}[theorem]{Definition}
\newtheorem{proposition}[theorem]{Proposition}
\newenvironment{proof}[1][Proof]{\noindent\textbf{#1.} }{\ \rule{0.5em}{0.5em}}
\newcommand{\bes} {\begin{subequations}}
\newcommand{\ees} {\end{subequations}}
\newcommand{\bea} {\begin{eqnarray}}
\newcommand{\eea} {\end{eqnarray}}
\newcommand{\beq}{\begin{equation}}
\newcommand{\eeq}{\end{equation}}
\newcommand{\bal}{\begin{align}}
\newcommand{\eal}{\end{align}}
\def\>{\rangle}
\def\<{\langle}
\renewcommand{\min}{\textrm{min}}
\renewcommand{\max}{\textrm{max}}
\newcommand{\ignore}[1]{}
\def\TI{\textrm{TI}}
\definecolor{nblue}{rgb}{0.2,0.2,0.7}
\definecolor{ngreen}{rgb}{0.2,0.6,0.2}
\definecolor{nred}{rgb}{0.7,0.2,0.2}
\definecolor{nblack}{rgb}{0,0,0}
\begin{document}

\title{Quantum speed limits, coherence and asymmetry}

\author{Iman Marvian}
\affiliation{Research Laboratory of Electronics, Massachusetts Institute of Technology, Cambridge, MA 02139}
\affiliation{Department of Physics and Astronomy, Center for Quantum Information Science and Technology, University of Southern California, Los Angeles, CA 90089}
\author{Robert W. Spekkens}
\affiliation{Perimeter Institute for Theoretical Physics, 31 Caroline St. N, Waterloo, \\
Ontario, Canada N2L 2Y5}

\author{Paolo Zanardi}
\affiliation{Department of Physics and Astronomy, Center for Quantum Information Science and Technology, University of Southern California, Los Angeles, CA 90089}

\date{\today}

\begin{abstract}

The resource theory of asymmetry is a framework for classifying and quantifying the symmetry-breaking properties of both states and operations  relative to a given symmetry. In the special case where the symmetry is the set of translations generated by a fixed observable, asymmetry can be interpreted as coherence relative to the observable eigenbasis, and the resource theory of asymmetry provides a framework to study this notion of coherence. We here show that this notion of coherence naturally arises in the context of quantum speed limits. Indeed, the very concept of \emph{speed of evolution}, i.e., the inverse of the minimum time it takes the system to evolve to another (partially) distinguishable state, is a measure of asymmetry relative to the time translations generated by the system Hamiltonian. Furthermore, the celebrated  Mandelstam-Tamm  and  Margolus-Levitin speed limits can be interpreted as upper bounds on this measure of asymmetry by functions which are themselves measures of asymmetry in the special case of pure states.  Using measures of asymmetry that are not restricted to pure states, such as the Wigner-Yanase skew information, we obtain extensions of  the Mandelstam-Tamm bound which are significantly tighter in the case of mixed states. We also clarify some confusions  in the literature about coherence and asymmetry, and show that measures of coherence are a proper subset of measures of asymmetry.
\end{abstract}
\maketitle

\newpage

\section{Introduction}

Quantum Speed Limits (QSL)  are fundamental bounds on the minimum time that it 
 takes  a quantum system to evolve  to a different state. QSLs have many applications, for instance, in quantum control, quantum computation, communication, and metrology. The most famous examples are the Mandelstam-Tamm  \cite{QSL_MT} and  Margolus-Levitin bounds  \cite{Margolus:98}, which have led to numerous extensions and applications \cite{QSL_Toffoli, QSL_Fleming,QSL_Bha83,QSL_Vaidman,Pfeifer:1993fk,QSL_Pfeifer_RMP,QSL_Lloyd, QSL_Lloyd2, QSL_Zych, QSL_Chau,Deffner_uncertainty, QSL_Davidov_13,QSL_Plenio_13,QSL_Deffner_13,QSL_14_SciRep, marvian2015quantum}.  Let $\tau_\perp(\rho)$ be the minimum time that it takes, under Hamiltonian $H,$ for the state $\rho$ to evolve to a perfectly distinguishable state. Then, the Mandelstam-Tamm bound asserts that
\begin{equation}\label{QSL:MT}
\tau_\perp(\rho) \ge \frac{\pi}{2\Delta E(\rho)} \ ,
\end{equation}
where   
$\Delta E(\rho)\equiv\sqrt{\text{tr}(\rho H^2)-\text{tr}^2(\rho H)}$ is the energy uncertainty in state $\rho$ (throughout this paper we take $\hbar=1$). According to the Margolus-Levitin bound, 
\begin{equation}\label{QSL:ML}
\tau_\perp(\rho) \ge \frac{\pi}{2 \left[{E}_\text{av}(\rho)-E_\text{min}(\rho)\right]} 
\end{equation}
where ${E}_\text{av}(\rho)=\text{tr}(\rho H)$ is the average energy of state $\rho$ and  $E_\text{min}(\rho)$ is the minimum energy level of Hamiltonian $H$ in which state $\rho$ has a nonzero component \cite{Margolus:98}.  Several generalizations of these bounds have been found (See, e.g., \cite{QSL_Toffoli, QSL_Fleming,QSL_Bha83,QSL_Vaidman,Pfeifer:1993fk,QSL_Pfeifer_RMP,QSL_Lloyd, QSL_Lloyd2, QSL_Zych, QSL_Chau,Deffner_uncertainty, QSL_Davidov_13,QSL_Plenio_13,QSL_Deffner_13,QSL_14_SciRep, marvian2015quantum, mondal2014quantum}). In particular, Giovannetti et al. \cite{QSL_Lloyd}  generalized these bounds by finding the lower bounds on the minimum time it takes for the system to evolve to a state having fidelity $\delta$ with the initial state. These lower bounds are basically the same as the original Mandelstam-Tamm and  Margolus-Levitin bounds, Eq.(\ref{QSL:MT}) and Eq.(\ref{QSL:ML}) up to a multiplicative factor that only depends on the fidelity $\delta$.   

Although both the Mandelstam-Tamm and the Margolus-Levitin QSL bounds are attainable for pure states, for a general mixed state they can be rather loose. For instance, if the state is \emph{incoherent} in the energy eigenbasis, i.e., diagonal in this basis, then it does not evolve. So, $\tau_\perp$ is infinite, and therefore $\tau^{-1}_\perp$, the speed of evolution, is zero. However,  in this case  the lower bounds on $\tau_\perp$  implied by Mandelstam-Tamm and  Margolus-Levitin QSLs can be arbitrarily small. In other words, in the case of states that are incoherent in the energy eigenbasis, the quantities $\Delta E$ and ${E}_\text{av}(\rho)-E_\text{min}(\rho)$ do not contain any information about the speed of evolution. 
All of this suggests that we might be able to find tighter quantum speed limits by quantifying the coherence of states relative to the energy eigenbasis. 

In recent years, two different approaches for quantifying the coherence of states have been proposed in the literature. The first approach defines coherence as asymmetry relative to a translational symmetry, such as time-translations or phase-shifts\cite{Modes, Noether, marvian2016quantify, QRF_BRS_07,lostaglio2015quantum}, while the second approach, proposed by Baumgratz et. al. \cite{Coh_Plenio} defines coherence as a resource which cannot be generated under \emph{incoherent operations}. (See Sec.(\ref{Sec:coh}) for a short review).  

In this paper, we will show that formalizing the notion of \emph{speed of evolution} naturally leads us to the notion of coherence as \emph{asymmetry relative to time translations}. Indeed, we will show that any notion of speed of evolution of a closed system is a measure of asymmetry relative to time translations generated by the system Hamiltonian.  Interestingly, it turns out that  Mandelstam-Tamm  and  Margolus-Levitin QSLs can both be interpreted as upper bounds on this measure of asymmetry by functions which are themselves measures of asymmetry in the case of pure states.  The variance of energy, $\Delta E^2$, for instance, is a measure of time-translation asymmetry on pure states.  A genuine measure of asymmetry, however, is one that applies to all states, not just pure states. Several of these have been recently constructed.  The Wigner-Yanase skew information is an example \cite{Marvian_thesis, Noether}.  We here show that by considering genuine measures of asymmetry in the case of time-translations,  
we can obtain extensions of the  Mandelstam-Tamm bound which are significantly tighter in the case of mixed states.  Note that throughout this paper we only consider time-independent Hamiltonians. 

We start with a short review of the resource theory of asymmetry and a discussion of the different approaches for quantifying coherence. We also clarify some confusions in the literature about concepts of asymmetry and coherence (See \cite{marvian2016quantify} for further discussions).

\section{Quantifying coherence}\label{Sec:coh}

In recent years, two slightly different approaches have been proposed for treating 
coherence as a resource.

The first approach defines coherence as \emph{asymmetry relative to a group of translations}, such as phase shifts or time translations \cite{Modes, Noether, lostaglio2015quantum, lostaglio2015description}. As we will see in the following, this is the notion of coherence which naturally appears in the context of QSLs. This notion of coherence has also been extensively used in the context of quantum thermodynamics, (See  e.g. \cite{lostaglio2015quantum, lostaglio2015description})  quantum optics and reference frames (See e.g. \cite{QRF_BRS_07, MS11, GMS09, Marvian_thesis}), and quantum metrology (See \cite{marvian2016quantify} for further discussions).   Indeed, the study of coherence as a resource has been one of the primary motivations for the developments of the theory of quantum reference frames and the resource theory of asymmetry \cite{Modes, QRF_BRS_07}.  In all these physical examples,  there is  a fundamental or an effective translational symmetry in the problem, or there is an additive conserved  observable, such as energy, momentum, angular momentum or total photon number. For instance, as is discussed in detail in a recent paper by Lostaglio et. al. \cite{lostaglio2015quantum}, this notion of coherence naturally shows up in the context of quantum thermodynamics, where the only \emph{free unitaries} are the energy-conserving ones. In Sec.(\ref{Sec:coh}) we briefly review the resource theory of asymmetry for the special case of translational symmetries.

On the other hand,  Baumgratz et al. have proposed a different approach for quantifying coherence \cite{Coh_Plenio}.  Given some preferred basis, it is natural to define the set of \emph{incoherent states} as those that are diagonal in this basis.  Baumgratz et al. define the set of {\em incoherent operations} as those quantum operations for which there exists a Kraus decomposition $\mathcal{E}(\cdot)=\sum_\mu K_\mu(\cdot) K^\dag_\mu $ such that for each Kraus operator $K_\mu$ and any incoherent state $\rho$,  $K_\mu\rho K^\dag_\mu/\text{tr}(K_\mu\rho K^\dag_\mu)$  is also an incoherent state.  In this approach, coherence is defined as the resource relative to the set of incoherent operations.
Specifically, according to this proposal, measures of coherence are functions over the states that are non-increasing under incoherent operations.

In the rest of this section, we give a short review of the resource theory of asymmetry for the special case of time-translations and we study the relation between the notion of coherence as translational asymmetry and the notion proposed by Baumgratz et al.

\subsection{Coherence as asymmetry relative to translations}\label{Sec:coh}

The resource theory of asymmetry is a framework for quantifying and classifying asymmetry of states and operations \cite{gour2008resource, Noether, GMS09, MS11, MS_Short, Marvian_thesis} (See \cite{brandao2015reversible, coecke2014mathematical, fritz2015mathematical} for a general discussion of resource theories). In the special case where the symmetry group is the set of translations generated by a fixed observable, asymmetry can be interpreted as coherence relative to the  eigenbasis of this observable, and the resource theory of asymmetry provides a framework to study this notion of coherence \cite{Modes, QRF_BRS_07, Noether, lostaglio2015quantum, lostaglio2015description}.

To characterize coherence relative to the eigenbasis of an observable, say a  time-independent  Hamiltonian  $H$, we consider the one-parameter group of  unitaries generated by this observable,  the set of time translations $\{e^{-i H t}: t\in\mathbb{R}\}$.    If the eigenvalues of the generator $H$ are all separated from each other  by a constant times integers, then the group of translation is isomorphic to $\text{U(1)}$, the group of phases.\footnote{In this case sometimes asymmetry is called \emph{$\text{U(1)}$-asymmetry.}}  This happens, for instance, in the case of total photon number, which generates phase shifts, or equivalently, in the case of the Hamiltonian for a harmonic oscillator.

In this resource theory, \emph{free states}  are defined as the states with no asymmetry, i.e., states which are invariant under all time translations,
\beq
 e^{-i H t} \rho_\TI e^{i H t}=\rho_\TI\ , \ \ \   \forall t\in\mathbb{R} \ .
\eeq 
Here the subscript TI stands for \emph{Translationally Invariant}. Clearly these consist of all and only the states which are diagonal in the Hamiltonian eigenbasis, i.e.,
\beq\label{TI_op}
\rho_\TI\in \mathcal{I}_H \  \Longleftrightarrow\     e^{-i H t} \rho_\TI e^{i H t}=\rho_\TI\ : \forall t\in\mathbb{R} \ ,
\eeq 
where  $\mathcal{I}_H$ is the set of \emph{incoherent states} in the energy eiegnbasis. 
In other words, \emph{incoherence relative to the Hamiltonian eigenbasis is equivalent to invariance under time-translations}.

Similarly, a trace-preserving completely positive map, i.e., a \emph{quantum operation}, is a \emph{free operation} in the resource theory of time-translation asymmetry if it is invariant under all time translations, 
that is,  if it satisfies 
 \begin{align}
e^{-i H t}\mathcal{E}_{\text{TI}}(\rho) e^{i H t}&=\mathcal{E}_\text{TI}\left(e^{-i H t}\rho e^{i H t}\right)\label{TI_op}\ , \ \ \   \forall t\in\mathbb{R}  ,
\end{align}
for any input state $\rho$. 
Translationally invariant quantum operations are termed \emph{TI operations} in this paper.  As we discuss in Sec.\ref{Levitin}, any TI operation can be implemented by  applying an energy-conserving unitary on the system and an \emph{environment}, which is initially in an incoherent state (See also \cite{yang2015optimal}). In other words, TI 
quantum operations consist of all and only those operations which can be implemented on the system under the restriction to time-invariant resources.

Clearly  TI quantum operations  cannot create coherence in the energy eigenbasis,
\beq\label{TI_op}
\rho_\TI\in \mathcal{I}_H \  \Longrightarrow\  \mathcal{E}_\TI(\rho_\TI)\in \mathcal{I}_H \ .
\eeq 
Motivated by this observation,  in this approach, coherence relative to the eigenbasis of $H$ is defined as the resource under TI operations, and therefore is quantified using \emph{measures of asymmetry} for the group of translations generated by $H$; i.e., functions satisfying the following definition
\begin{definition}\label{def1}
A function $f$ from states to real numbers is a measure of asymmetry with respect to translations generated by a given observable $H$, if it satisfies \\
\noindent (i) For any TI quantum operation $\mathcal{E}_{\text{TI}}$, and any state $\rho$ it holds that $f(\mathcal{E}_{\text{TI}}(\rho))\le f(\rho)$.\\
\noindent (ii) For any incoherent state $ \rho_\text{TI}\in \mathcal{I}_H$, it holds that $f(\rho_\text{TI})=0$.
\end{definition}
 Note that the second condition is simply a convention which fixes the value of function $f$ on incoherent states, and guarantees that it is a non-negative function of states. This is true because for any incoherent state there is  a TI operation which maps its input to that incoherent state\footnote{For instance,  the quantum operation which discards the input state and prepares the desired incoherent state.}, 
and so any function which satisfies condition (i) should take the same value on all incoherent states, and this should be the minimum value that function takes on all states. Therefore, by shifting the function by a constant, one can always make sure that it satisfies condition (ii) as well and is non-negative.

Also, note that for closed-system dynamics under Hamiltonian $H$, any measure of asymmetry (relative to time translation) remains constant, i.e.,
\begin{equation}\label{inv4}
\forall t\in \mathbb{R}:\ \ \  f(\rho)=f(e^{-i H t}\rho\ e^{i H t})\ ,
\end{equation}
for any state $\rho$. 
This follows from the fact that at any time $t$ the map $\rho\rightarrow e^{-i H t} \rho e^{i H t}$ is a TI quantum operation, and it can  be inverted by another TI quantum operation, namely $\rho\rightarrow e^{i H t} \rho e^{-i H t}$. 

In recent years, many examples of measures of asymmetry have been studied in the literature (See e.g. \cite{gour2008resource, Modes, GMS09, Marvian_thesis, Noether, skotiniotis2012alignment, narasimhachar2014phase, toloui2011constructing, vac2008, girolami2015witnessing, toloui2012simulating}). In particular, \cite{Noether, Marvian_thesis} propose a general recipe for constructing measures of asymmetry. Using this recipe, for instance,  it is shown that the function  
\begin{align}
F_H(\rho)&\equiv \| [H,\rho]\|_1
\label{F_H}
\end{align}
is a \emph{faithful} measure of asymmetry  \cite{Noether, Marvian_thesis}, 
 where faithfulness means that  it vanishes if and only if the state is incoherent (In this paper $\|\cdot\|_1$ denotes the $l_1$-norm, i.e., the sum of singular values of the operator).  Later, we will present some interesting properties of this measure of asymmetry and show that it is indeed relevant in the context of quantum speed limits.

\subsection{Relation between the two approaches}


In this section, we study the relation between understanding coherence as asymmetry relative to a  group of translations and understanding coherence in the manner proposed by Baumgratz et al. \cite{Coh_Plenio} and we briefly discuss the applications of the first approach (See \cite{marvian2016quantify} for further discussion).
  Notice that although in this paper we often assume that the generator of the translations is the system's Hamiltonian, the following discussion holds for any other observable, such as photon number or linear momentum or angular momentum.

According to Eq.(\ref{TI_op}) under TI quantum operations any state which is incoherent  in the eigenbasis of the generator of translations   evolves to a state which is still incoherent in this basis. Moreover, as is shown in the appendix (See also \cite{yang2015optimal}),
\begin{proposition} \label{prop_incoh}
For any given observable $H$ (in particular, the
Hamiltonian), all TI operations are incoherent operations (in the
sense defined by Baumgratz et al. \cite{Coh_Plenio}, relative to the eigenspaces of
$H$).  Therefore, any measure of coherence in the sense of Baumgratz et al. \cite{Coh_Plenio}, i.e., a function that is non-increasing under incoherent operations, is also a measure of translational asymmetry. 
\end{proposition}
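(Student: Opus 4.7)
The plan is to establish the first half of the proposition (every TI operation is an incoherent operation in the Baumgratz et al.\ sense) via a mode-operator Kraus decomposition of covariant channels, from which the second half (Baumgratz-style coherence measures are translational-asymmetry measures) follows essentially for free.

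First, I would show that every TI channel $\mathcal{E}_\TI$ admits a Kraus decomposition $\mathcal{E}_\TI(\cdot)=\sum_\mu K_\mu (\cdot) K_\mu^\dagger$ in which each $K_\mu$ is a \emph{mode operator}, i.e.\ it satisfies $e^{-iHt} K_\mu e^{iHt} = e^{-i\omega_\mu t} K_\mu$ for some fixed frequency $\omega_\mu \in \mathbb{R}$. This is a standard consequence of the covariance of $\mathcal{E}_\TI$ under the one-parameter group generated by $H$: passing to the Choi matrix, covariance becomes the constraint $[J(\mathcal{E}_\TI),H\otimes\ident-\ident\otimes H^T]=0$, and the spectral decomposition of $J(\mathcal{E}_\TI)$ into eigenspaces of this commutant supplies Kraus operators with the desired mode structure.

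Next, I would verify that every mode operator $K_\mu$ is an incoherent Kraus operator with respect to the eigenspaces of $H$. The mode condition implies $K_\mu = \sum_E \Pi_{E+\omega_\mu}\, K_\mu\, \Pi_E$, where $\Pi_E$ is the projector onto the energy-$E$ eigenspace. For an incoherent state $\rho = \sum_E \Pi_E \rho \Pi_E$ one then computes
\begin{equation}
K_\mu \rho K_\mu^\dagger = \sum_E \Pi_{E+\omega_\mu}\, K_\mu\, \Pi_E \rho \Pi_E\, K_\mu^\dagger\, \Pi_{E+\omega_\mu},
\end{equation}
which is manifestly block-diagonal on the $H$-eigenspaces and hence incoherent. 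The second half of the proposition is then immediate: if $f$ is non-increasing under all incoherent operations, it is in particular non-increasing under the subclass of TI operations, so Definition~\ref{def1}(i) holds; and $f$ vanishes on Baumgratz-incoherent states, which in this context are exactly the TI states (both being the states commuting with $H$), so Definition~\ref{def1}(ii) holds as well (or can be enforced by a trivial shift).

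The main obstacle is the mode-operator Kraus decomposition in the first step: while the representation-theoretic content is light, one must handle the case in which $H$ has degenerate eigenvalues (so the mode frequencies $\omega_\mu$ need not be in bijection with Bohr frequencies) and, if relevant, the case of continuous spectrum (where the decomposition over $\omega_\mu$ becomes an integral). Both are handled by the same device of Fourier-decomposing the group-averaged Choi matrix with respect to the translation group and regrouping the resulting spectral projectors into Kraus operators of definite mode; the rest of the argument is bookkeeping.
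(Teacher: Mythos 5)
Your proof is correct, and it reaches the key lemma by a genuinely different route than the paper. Both arguments hinge on the same structural fact --- every TI channel admits a Kraus decomposition into \emph{mode operators} satisfying $e^{-iHt}K_\mu e^{iHt}=e^{\pm i\omega_\mu t}K_\mu$ --- and the final step (a mode operator maps block-diagonal states to block-diagonal states) is the same in substance: you write it out with spectral projectors $\Pi_{E+\omega_\mu}K_\mu\Pi_E$, while the paper conjugates $K_\mu\rho_\TI K_\mu^\dag$ by $e^{-iHt}$ for all $t$ and concludes it commutes with $H$. Where you differ is in how the mode decomposition is obtained: you diagonalize the Choi matrix inside the eigenspaces of $H\otimes\ident-\ident\otimes H^T$ (essentially the argument behind Lemma 1 of Gour--Spekkens, which the paper cites but deliberately does not reprove), whereas the paper derives it from the Stinespring dilation of a TI operation --- an energy-conserving unitary $V_\text{TI}$ acting on an environment prepared in an energy eigenstate $|E_0\rangle$, giving $K_l=\langle E_l|V_\text{TI}|E_0\rangle$ with $\omega_l=E_l-E_0$. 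Your Choi-based route is more self-contained and purely algebraic; the paper's dilation route reuses machinery already introduced for the Margolus--Levitin analysis in Sec.~\ref{Levitin} and yields a physical reading of $\omega_\mu$ as the energy transferred from the environment to the system in the branch labelled by $K_\mu$. Your closing remarks on degeneracy and continuous spectrum are apt but not needed at the level of rigor of the paper, which (like your finite-dimensional Choi argument) implicitly works with discrete spectra; and your handling of the second half of the proposition, including the observation that condition (ii) of Definition~\ref{def1} can be enforced by a constant shift, matches the paper's conventions.
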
 
As a matter of fact, it turns out that almost all measures of coherence which have been found recently,  have been previously studied  in  the resource theory of asymmetry. For instance, the function called \emph{relative entropy of coherence} by Baumgratz et. al. \cite{Coh_Plenio}  has been extensively studied as a measure of asymmetry under the names of \emph{G-asymmetry} and \emph{relative entropy of asymmetry} \cite{vac2008, GMS09, brandao2015reversible, toloui2012simulating}, and it has been generalized to a family of measures of asymmetry, called \emph{Holevo asymmetry measures} \cite{Noether, Marvian_thesis} (See also  \cite{Modes, Marvian_thesis} for measures of asymmetry based on $l_1$ norm).

On the other hand,  there are incoherent operations (unitaries) which are not translationally invariant. For instance, consider permutations of the eigenvectors of the generator $H$, i.e.,  unitaries in the form
\beq
U_\sigma= \sum_i |\sigma(i)\rangle\langle i|\ ,
\label{U_sigma}
\eeq
where $\sigma$ is an arbitrary permutation of the elements of the eigenbasis of $H$.
It can be easily seen that, while all these unitaries are incoherent operations, in general they are not TI operations. Thus, TI quantum operations are a proper subset of incoherent operations (See Fig.~\ref{Fig}).

\begin{figure} [h]
\begin{center}
\includegraphics[scale=.38]{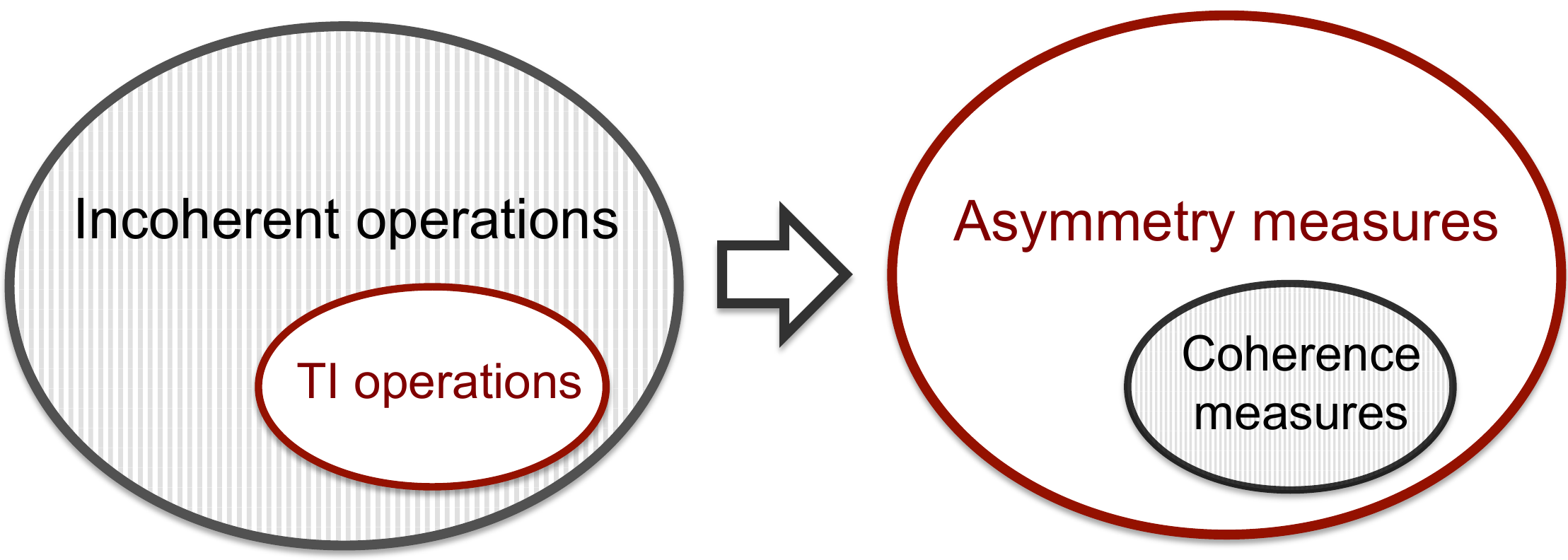}
\caption{Translationally invariant operations are a proper subset of incoherent operations. Consequently, measures of coherence (in the sense defined by Baumgratz et al. \cite{Coh_Plenio}), i.e. functions which are non-increasing under incoherent operations, are a proper subset of measures of asymmetry. }   \label{Fig}
\end{center}
\end{figure}    

Moreover, it turns out that there are measures of translational  asymmetry which are not measures of coherence in the sense of Baumgratz et al., i.e., they can increase under incoherent operations.
 In particular, any function $f_H$ which is a measure of coherence according to the definition of \cite{Coh_Plenio} should satisfy 
 \beq\label{cond}
f_H(\rho)=f_{H}\left(U_\sigma \rho U^\dag_\sigma \right) \ ,
\eeq
for any unitary $U_\sigma$ of the form (\ref{U_sigma}), and any state $\rho$, and,
remarkably,  condition (\ref{cond})  is not necessarily satisfied by all measures of asymmetry. In particular, it is not satisfied by measures of asymmetry which are relevant in the context of QSLs, such as the function $F_H$ introduced above, or the Wigner-Yanase skew information (See section \ref{skew}).

This fact is  related to an important distinction between the two approaches for quantifying coherence: unlike the first approach  based on the notion of asymmetry, in the approach of Baumgratz et. al. \cite{Coh_Plenio} the eigenvalues of the observable which defines the preferred basis relative to which coherence is defined  are irrelevant. 

However, in many physical applications where the notion of coherence is important,  the eigenvalues of the observable which defines the preferred basis play an important role.

 As a simple example, consider the problem of phase estimation, where light in a particular mode is sent through an optical element that generates an unknown phase shift $e^{i\theta}$, and the goal is to estimate this phase shift. In this context, coherence between states with different numbers of photons is an essential resource: incoherent states are useless for phase estimation. Let us now consider the two states  $|0\rangle+|1\rangle$ and $|0\rangle+|N\rangle$ where $N > 1$, which both contain coherence.  From the point of view of the resource theory of coherence proposed by Baumgratz et. al. \cite{Coh_Plenio},  these states are equivalent resources, because they can be interconverted to each other by  incoherent operations of the form (\ref{U_sigma}), and therefore any measure of coherence takes the same value on them. In spite of this fact, in the above phase estimation task, the information one can obtain about the unknown phase $e^{i\theta}$  will be  different for these two  states: in one case  the relative phase shift of the two terms in the state is $e^{i\theta}$, and in the other it is $e^{i N \theta}$. Thus,  in this context, these  states are not equivalent resources, and therefore their usefulness for the task of phase estimation
cannot be quantified using the measures of coherence proposed by Baumgraz et al. 
On the other hand, measures of translational asymmetry , for instance,  $F_H$ in Eq. ~(\ref{F_H}), can capture the difference between these two states. Moreover, it turns out that \emph{any function which quantifies the performance of states for the task of phase estimation is automatically a measure of asymmetry relative to phase shifts} \cite{marvian2016quantify}. 


Not only in quantum metrology, but in physical contexts such as quantum thermodynamics and quantum reference frames and QSLs, which will be studied in this paper, the particular eigenvalues of the observable which defines the preferred basis are also significant. In such cases,  the measures of coherence (in the sense of \cite{Coh_Plenio}) provide a very limited characterization of coherence of states; to find a complete  characterization one needs to use measures of translational asymmetry, which form a larger set of functions.  In this paper, we study the case of QSLs, where the notion of coherence naturally shows up, and we show that, while measures of coherence (in the sense of \cite{Coh_Plenio}) cannot capture any information about the speed of evolution, any natural notion of speed of evolution is automatically a measure of asymmetry (See also \cite{marvian2016quantify} for further discussions on other physical examples).

\subsection{Is Wigner-Yanase skew information a measure of coherence?}\label{skew}

In 1963, Wigner and Yanase \cite{wigner1963information}  introduced the function 
\bes\label{S_H}
\begin{align}
S_H(\rho) \equiv \frac{1}{2}\|[H, \sqrt\rho ]\|_2^2= -\frac{1}{2}\text{tr}\left( [H, \sqrt\rho]^2 \right)\\ = \text{tr}\left( H^2 \rho \right)-\text{tr}(\sqrt{\rho} H\sqrt{\rho} H)\ ,
\end{align}
\ees
now called the \emph{Wigner-Yanase skew information}, and proved that it had certain interesting properties, 
such as convexity and additivity.\footnote{The $l_2$ norm is defined by $\|X\|_2=\sqrt{\text{tr} (X^\dagger X)}$}.  Notice that if $\rho$ is a pure state, then $\rho=\sqrt{\rho}$ and $S_H(\rho)$ reduces to the variance of $H$. Later, Dyson generalized this to the function $-\text{tr}([\rho^s,H][\rho^{1-s},H])$ for $0\le s\le 1$, which is sometimes called the Dyson-Wigner-Yanase skew information, and Lieb famously proved the convexity of this function for $0<s<1$ \cite{lieb1973convex}.

Wigner and Yanase proposed $S_H(\rho)$ as  a measure of information  and, equivalently,   $-S_H(\rho)$ as a measure of entropy for  the situations where the observable $H$ is an additive conserved quantity such as charge or components of linear or angular momenta \cite{wigner1963information}. 
Alternatively,  $S_H(\rho)$ is sometimes regarded as a measure of the \emph{non-commutativity} of the state $\rho$ and the observable $H$ (See, e.g., \cite{wehrl1978general}).

In \cite{Noether, Marvian_thesis} a new interpretation of this function  was unveiled. It was shown that Wigner-Yanase  skew information is a measure of asymmetry, 
 and therefore quantifies symmetry-breaking  relative to translations generated by $H$.  In fact, even more generally,  in \cite{Noether, Marvian_thesis}  it was shown that the  Dyson-Wigner-Yanase skew information
  is also a measure  of asymmetry for  $s\in (0,1)\cup (1,2]$.

Recently, Girolami  \cite{Girolami} proposed  an experimental method for measuring the Wigner-Yanase skew information, and argued that this function is a good candidate for quantifying coherence. Furthermore, he claimed  that this function is  a measure of coherence according to the definition of Baumgratz et. al. \cite{Coh_Plenio},  that is, he claimed that it is non-increasing under incoherent operations. However, the latter claim is incorrect. This can be  seen, for instance, by noting that in the case of pure states this function is equal to the variance of the observable $H$, but variance obviously is  not invariant under operations (\ref{U_sigma}), i.e., it violates Eq.(\ref{cond}) \footnote{The increase of skew information under incoherent operations is also observed in \cite{du2015wigner}, by looking through an explicit example.}.

To summarize, the Wigner-Yanase skew information $S_H$  is a measure of asymmetry relative to the group of translations generated by the observable $H$, that is,  $\{e^{-iH t}: t\in\mathbb{R}\}$.  Furthermore, as we discussed before, any such measure of asymmetry can be used to quantify the coherence of a state relative to the eigenbasis of $H$ and this quantification of coherence has nontrivial applications, for instance, in the context of quantum metrology, quantum reference frames,  and quantum speed limits (as will be shown in this paper).  However, this function does not satisfy the definition of a measure of coherence according to Baumgratz et al. \cite{Coh_Plenio},  as it can increase under incoherent operations.   

In the following, we show that measures of time-translation asymmetry naturally arise in the context of quantum speed limits, and, in particular, the skew information has a very natural interpretation as \emph{instantaneous acceleration}. Indeed, we show that the very notion of the \emph{speed of evolution} can be interpreted as a measure of time-translation asymmetry.

\section{Speed of evolution}

The standard quantum speed limits in Eq.(\ref{QSL:MT}) and Eq.(\ref{QSL:ML}) are lower bounds on $\tau_\perp(\rho)$, the minimum time it takes, under Hamiltonian $H$, for state $\rho$ to evolve to a perfectly distinguishable  state $\rho(t)\equiv e^{-i H t}\rho\ e^{i H t}$. Consequently,  the function $\tau_\perp^{-1}$ can be interpreted as the \emph{(average) speed of evolution}. 
 It is useful to consider generalizations of the function $\tau_\perp$ to cases where the states $\rho$ and $\rho(t)$ are only partially distinguishable.
\subsection{Measures of distinguishability} 

Quantum information theory provides different tools for quantifying the distinguishability of a pair of states. In particular, we are interested in functions from pairs of states to the real numbers, with the following three properties:\\
\textbf{(i)}  monotone under information processing, i.e., satisfying Eq.(\ref{info_ineq}),\\
\textbf{(ii)}  vanishing  when the two input states are the same, i.e., satisfying Eq.(\ref{info_ineq2}), and \\
\textbf{(iii)} jointly quasi-convex, i.e., satisfying Eq.(\ref{quasi}).\\
In the following we provide the formal definition and discuss the significance of each of these properties. We also review some examples of functions satisfying all of these properties.

The most important property of measures of distinguishability is monotonicity under \emph{information processing}. This means that  for any quantum operation $\mathcal{E}$ and for any pair of states $\sigma_1$ and $\sigma_2$, a measure of distinguishability $D$ should satisfy the \emph{information processing inequality}, 
\begin{equation}\label{info_ineq}
D\left(\mathcal{E}(\sigma_1),\mathcal{E}(\sigma_2)\right)\le D(\sigma_1,\sigma_2)\ .
\end{equation}
Note that set of quantum operations, i.e., the completely positive trace-preserving maps, include all and only the physical transformations that one can implement on a quantum system without any prior 
 information about its initial state.
Thus, satisfying this bound is the minimum requirement that any measure of distinguishability should satisfy.

In this paper we also assume that measures of distinguishability vanish when the two input states are the same
\begin{equation}\label{info_ineq2}
D\left(\sigma,\sigma\right)=0\ .
\end{equation}
Notice that this assumption is basically just  a convention: any function satisfying the information processing inequality, Eq.~(\ref{info_ineq}), can be shifted by a constant to satisfy Eq.(\ref{info_ineq2}) as well.\footnote{For any function $D$ which satisfies the information processing inequality, the value of $D(\sigma,\sigma)$ is independent of the state $\sigma$, because for any pair of states $\sigma_1$ and $\sigma_2$ there is a quantum operation that maps one to the other, and so  according to Eq.(\ref{info_ineq}), $D(\sigma_1,\sigma_1)=D(\sigma_2,\sigma_2)$. } 
It follows from Eqs.~(\ref{info_ineq}) and (\ref{info_ineq2})  that the function $D$ is non-negative.
Finally note that  $D,$ contrary to a true distance measure, 
does not have to be symmetric in its arguments, i.e., in general   $ D\left(\sigma_1,\sigma_2\right)\neq D\left(\sigma_2,\sigma_1\right)$.

In this paper we are are going to focus on measures of distinguishability  $D$ which are \emph{jointly quasi-convex}, meaning that for all $0\le p\le 1$ and any two pairs of states, $(\rho_1,\sigma_1)$ and $(\rho_2,\sigma_2)$, 
$D$ satisfies the following  inequality
\begin{align}\label{quasi}
D&\Big(p\rho_1+(1-p)\rho_2\ , p\sigma_1+(1-p)\sigma_2 \Big)\nonumber\\ &\ \ \ \ \ \ \ \ \ \ \ \ \le \text{max}\big\{D\left(\rho_1, \sigma_1 \right) ,  D\left(\rho_2, \sigma_2 \right) \big\}\ .
\end{align}
This inequality is a  weakening
 of \emph{joint convexity}, which is defined as 
\begin{align}
D&\Big(p\rho_1+(1-p)\rho_2\ ,\ p\sigma_1+(1-p)\sigma_2 \Big)\nonumber\\ &\ \ \ \ \ \ \ \ \ \ \ \ \le p D\left(\rho_1, \sigma_1 \right)+(1-p) D\left(\rho_2, \sigma_2 \right)\ .
\end{align}

Joint quasi-convexity of a measure of distinguishability ensures that 
the pair of states obtained by taking the mixture of a collection of pairs of states (where the mixing weights are the same for each element of the pair) are never more distinguishable than the most distinguishable pair in the collection. 
Joint convexity, on the other hand, asserts that the pair of states obtained by mixing a collection of pairs has distinguishability no greater than the {\em weighted average} of the distinguishabilities of the pairs in the collection.  
Clearly, joint convexity is a much stronger requirement than joint quasi-convexity.
The intuitive notion that a measure of distinguishability should be non-increasing under mixing, which is often given as an argument in favour of joint convexity, in fact only justifies quasi-convexity.

%
As noted earlier, defining a speed of evolution in terms of the time to reach a partially distinguishable state requires one to choose a measure of distinguishability for pairs of states.  As we will discuss in section \ref{mixing}, assuming that the measure of distinguishability satisfies joint quasi-convexity ensures that the speed of evolution of a state that is the mixture of some set of states is no greater than the fastest speed of evolution of any state in that set.  

Trace distance, relative entropy,  Renyi relative entropy and  infidelity ($1-F$ where $F$ is the fidelity) are all examples of measures of distinguishability which satisfy properties (i), (ii) and (iii). In this paper, we focus on the two particular examples of trace distance and Renyi relative  entropy. 

The trace distance between two quantum states, $\rho_1$ and $\rho_2$, is defined as  $\|\rho_1-\rho_2\|_1$, where $\|\cdot\|_1$ is the 1-norm. As Helstrom has shown~\cite{Helstrom:book}, the trace distance determines the maximum probability of successfully  determining which of the states in the pair was prepared, given a single copy, when the states have equal prior probability of having been prepared. 
It immediately follows that trace distance is non-increasing under information processing \cite{hayashi2006quantum}.  
Furthermore, the triangle inequality for the $l_1$-norm implies that the trace distance is jointly convex, and hence jointly quasi-convex.  

The second example of a measure of distinguishability that we use in this paper is the \emph{Renyi quantum  relative entropy}, introduced by Petz as one of the quantum generalizations of  (classical) Renyi relative entropy \cite{petz1986quasi, hayashi2006quantum}.
\footnote{Note that this definition is different from the ``sandwiched'' Renyi relative entropy \cite{muller2013quantum,wilde2014strong}.}
 For $s\in (0,1)\cup (1,\infty)$, this function is defined as 
\begin{equation} \label{def}
D_{s}(\rho_{1},\rho_{2})\equiv \frac{1}{s-1}\log\left(\text{tr}(\rho_{1}^{s}\rho_{2}^{1-s})\right)\ ,
\end{equation}
and it satisfies both Eq.~(\ref{info_ineq}) and Eq.~(\ref{info_ineq2}) for $s\in (0,1)\cup (1,2]$ \cite{mosonyi2014convexity}. Also the function is jointly convex for  $s\in (0,1)$  \cite{lieb1973convex, petz1986quasi, mosonyi2014convexity}. In this paper,  we  focus on the case of $s=1/2$, i.e., $D_{1/2}(\rho_{1},\rho_{2})\equiv -2\log\text{tr}(\sqrt{\rho_{1}}\sqrt{\rho_{2}})$, though the idea can be generalized.

\subsection{Definition of speed of evolution}

Given any measure of distinguishability satisfying conditions (i), (ii) and (iii), we can define a notion of \emph{speed of evolution}, which generalizes the function $1/\tau_\perp$ that appears in the standard quantum speed limits. For $\epsilon>0$, let $\tau^D_\epsilon(\rho)$ be the minimum time it takes for a state $\rho$ to evolve, under Hamiltonian $H,$ to another state $\rho(t)\equiv e^{-i H t}\rho\ e^{i H t}$ which is at least $\epsilon$-distinguishable from state $\rho$ relative to $D$, i.e. $D\left(\rho, \rho(t)\right)\ge \epsilon$.  If this never happens for $t>0$, we define $\tau^D_\epsilon(\rho)$  to be infinity. So, to summarize
\begin{equation}\label{def:t}
 \tau^D_\epsilon(\rho)\equiv\begin{cases}
    \infty\ ,  \ \ \ \ \ \ \ \ \ \ \ \ \ \ \ \ \text{if}\ \ \forall t\in\mathbb{R}^+:  \ \  D\left(\rho, \rho(t)\right)< \epsilon  \\
    \min\{t: t\in\mathbb{R}^+\ , \ D\left(\rho, \rho(t)\right)\ge \epsilon \}\ , \ \  \text{otherwise},
  \end{cases}
\end{equation}
or equivalently, $\tau^D_\epsilon(\rho)\equiv \text{sup}\{t: D\left(\rho, \rho(t')\right)< \epsilon, \forall t'\in (0,t)   \}$.
Therefore, 
for any $\epsilon>0$ and any measure of distinguishability $D$ which satisfies conditions (i), (ii) and (iii),  the function $1/{\tau^D_\epsilon(\rho)}$ (or $\epsilon/{\tau^D_\epsilon(\rho)}$) defines a natural notion of (average) \emph{speed of evolution}. 

A simple example of a measure of distinguishability that satisfies conditions (i), (ii), and (iii)  is the function $D_\perp(\rho,\sigma)$, defined to be one if and only if the two states $\rho$ and $\sigma$ are perfectly distinguishable (which requires them to have orthogonal supports) and zero otherwise. 
 Starting from this measure of distinguishability, and using the definition (\ref{def:t}) for $\epsilon=1$, one obtains  the function $\tau_\perp$ that appears in the Mandelstam-Tamm and  Margolus-Levitin bounds Eqs.(\ref{QSL:MT} and \ref{QSL:ML}). 
Thus, the corresponding speed of evolution, $\tau^{-1}_\perp$, satisfies the  above definition.  Later, we will consider two  other examples of functions $\tau^D_\epsilon$ which are obtained based on the trace distance and Renyi   relative entropy, as measures of distinguishability.

\subsection{Speed of evolution is a measure of asymmetry}

Next, we present our first result on the connection between quantum speed limits and measures of asymmetry. Recall that any incoherent state in the Hamiltonian eigenbasis, i.e., any member of $\mathcal{I}_H$, commutes with the Hamiltonian, and so it remains invariant under the evolution generated by this Hamiltonian. Therefore, relative to any measure of distinguishability $D$, its corresponding speed $1/\tau^D_\epsilon$ is zero. Intuitively, one may expect that having a higher speed of evolution  corresponds to  being {\em less} invariant under time-translation, which is to say having  a  higher amount of asymmetry relative to time-translation, or equivalently, a higher amount of coherence relative to the eigenbasis of $H$. The following theorem confirms this intuition. 
\begin{theorem}\label{prop1}
For any measure of distinguishability $D$ that satisfies the information processing inequality, Eq.(\ref{info_ineq}) and vanishes when the two states are the same, Eq.(\ref{info_ineq2}), and for any $\epsilon>0$, the function $1/\tau^D_\epsilon$ is a measure of asymmetry relative to the time-translations (generated by the system Hamiltonian $H$). 
\end{theorem}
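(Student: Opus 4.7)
The plan is to verify the two defining properties of a measure of asymmetry from Definition 1 directly, using only the two assumed properties of $D$ (no quasi-convexity required here). Throughout I write $\rho(t) \equiv e^{-iHt}\rho\,e^{iHt}$.

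First I would dispose of condition (ii). If $\rho_{\text{TI}}\in\mathcal{I}_H$, then by Eq.~(\ref{TI_op}) one has $\rho_{\text{TI}}(t)=\rho_{\text{TI}}$ for all $t$. Hence $D(\rho_{\text{TI}},\rho_{\text{TI}}(t))=D(\rho_{\text{TI}},\rho_{\text{TI}})=0<\epsilon$ for every $t>0$, by property (ii) of $D$ (Eq.~(\ref{info_ineq2})). Thus no $t>0$ ever achieves the threshold, so $\tau^D_\epsilon(\rho_{\text{TI}})=\infty$ and $1/\tau^D_\epsilon(\rho_{\text{TI}})=0$, as required.

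The substantive step is condition (i): showing $\tau^D_\epsilon(\mathcal{E}_{\text{TI}}(\rho))\ge \tau^D_\epsilon(\rho)$ for any TI operation $\mathcal{E}_{\text{TI}}$ and any state $\rho$, which is equivalent to $1/\tau^D_\epsilon(\mathcal{E}_{\text{TI}}(\rho))\le 1/\tau^D_\epsilon(\rho)$. The key identity is that TI operations commute with the free time-evolution, so that for every $s$
\begin{equation*}
e^{-iHs}\mathcal{E}_{\text{TI}}(\rho)e^{iHs}=\mathcal{E}_{\text{TI}}(\rho(s)),
\end{equation*}
by Eq.~(\ref{TI_op}). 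Combining this with the information processing inequality Eq.~(\ref{info_ineq}) applied to the pair $(\rho,\rho(s))$ under the quantum operation $\mathcal{E}_{\text{TI}}$ gives
\begin{equation*}
D\bigl(\mathcal{E}_{\text{TI}}(\rho),\,e^{-iHs}\mathcal{E}_{\text{TI}}(\rho)e^{iHs}\bigr)=D\bigl(\mathcal{E}_{\text{TI}}(\rho),\,\mathcal{E}_{\text{TI}}(\rho(s))\bigr)\le D(\rho,\rho(s)).
\end{equation*}

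To finish, I would then argue from the definition of $\tau^D_\epsilon$ in Eq.~(\ref{def:t}). For every $s$ in the open interval $(0,\tau^D_\epsilon(\rho))$ the right-hand side above is strictly less than $\epsilon$, hence so is the left-hand side; this forces $\tau^D_\epsilon(\mathcal{E}_{\text{TI}}(\rho))\ge \tau^D_\epsilon(\rho)$ (treating the $\infty$ case in the obvious way, since if $\rho$ never becomes $\epsilon$-distinguishable then neither does $\mathcal{E}_{\text{TI}}(\rho)$). Taking reciprocals yields property (i) of Definition~\ref{def1}. I do not anticipate any real obstacle: the only conceptually nontrivial ingredient is noticing that the covariance property of TI channels lets one transport the bound from the freely-evolved pair to the $\mathcal{E}_{\text{TI}}$-processed pair, and that the infimum defining $\tau^D_\epsilon$ respects pointwise inequalities on $D(\rho,\rho(\cdot))$. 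Notice that joint quasi-convexity of $D$ is not used in the proof; it enters only later when discussing the behavior of the speed of evolution under mixing of input states.
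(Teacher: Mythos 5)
Your proposal is correct and follows essentially the same route as the paper: both proofs combine the covariance of TI operations with the information-processing inequality to obtain $D(\mathcal{E}_{\text{TI}}(\rho),e^{-iHt}\mathcal{E}_{\text{TI}}(\rho)e^{iHt})\le D(\rho,\rho(t))$ and then compare the threshold times, and both handle condition (ii) by noting that incoherent states are stationary so $\tau^D_\epsilon=\infty$. Your observation that joint quasi-convexity is not needed here also matches the paper, which invokes it only for the later quasi-convexity result about mixing.
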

\begin{proof} First consider the case where  $\tau^D_\epsilon(\rho)<\infty$ for the given state $\rho$. In this case there is a finite time $t$ at which $\rho$ and $\rho(t)$ are at least $\epsilon$-distinguishable relative to $D$, i.e.,  $D(\rho,\rho(t))\ge \epsilon\ $.
Let  $\mathcal{E}_\text{TI}$ be an arbitrary TI quantum operation.
Then, it holds that 
\begin{align}\label{Eq1}
&\tau^D_\epsilon\left(\mathcal{E}_\text{TI}(\rho)\right)\nonumber\\ &= \min\{t: t\ge 0\ , D\left(\mathcal{E}_\text{TI}(\rho), e^{-i H t}\mathcal{E}_\text{TI}(\rho)\ e^{i H t}\right)\ge \epsilon \}\nonumber \\ &= 
\min\{t:  t\ge 0\ , D\left(\mathcal{E}_\text{TI}(\rho), \mathcal{E}_\text{TI}(e^{-i H t}\rho\ e^{i H t})\ \right)\ge \epsilon \} \nonumber \\ &\ge 
\min\{t:  t\ge 0\ , D\left(\rho, \rho(t)\ \right)\ge \epsilon \}\nonumber \\ &= 
\tau^D_\epsilon\left(\rho\right) ,
\end{align}
where to get the third line, we have used the time-translational symmetry of $\mathcal{E}_\text{TI}$, i.e., Eq.(\ref{TI_op}), and to get the fourth line we have used the  information processing inequality (\ref{info_ineq}), which implies that for any time $t$,
\beq
D\left(\rho, \rho(t)\right)\ge D\left(\mathcal{E}_\text{TI}(\rho), \mathcal{E}_\text{TI}(\rho(t))\ \right)\ ,
\eeq
and so the minimum $t$ in the third line  should be larger than or equal to the minimum $t$ in the fourth line. Using a similar argument one can easily see that if $\tau^D_\epsilon(\rho)=\infty$, i.e., if the distinguishability of $\rho$ and $\rho(t)$ is always less than $\epsilon$, then the distinguishability of $ \mathcal{E}_\text{TI}(\rho)$ and $e^{- i H t} \mathcal{E}_\text{TI}(\rho) e^{i H t}$ is also always less than $\epsilon$, and so $\tau^D_\epsilon( \mathcal{E}_\text{TI}(\rho) )=\infty$. So, in general, we find that for any TI quantum operation $ \mathcal{E}_\text{TI}$, it holds that
$\tau^D_\epsilon(\rho )\le \tau^D_\epsilon( \mathcal{E}_\text{TI}(\rho) ),$
and hence
\beq
\frac{1}{\tau^D_\epsilon(\rho)}\ge  \frac{1}{\tau^D_\epsilon(\mathcal{E}_\text{TI}(\rho))}\ .
\eeq
Therefore, the function $1/\tau^D_\epsilon$ satisfies condition (i) in the definition of a  measure of translational asymmetry (Definition \ref{def1}). Finally, note that   any incoherent state $\rho_\text{TI}$ is invariant under time evolution, and so for any $\epsilon>0$, $\tau^D_\epsilon\left(\rho\right)=\infty$ which implies that the speed $1/\tau^{D}_\epsilon(\rho)=0$, and therefore that $1/\tau^D_\epsilon$ satisfies condition (ii) in Definition \ref{def1} as well. This completes the proof of the theorem.\end{proof}\\
This theorem shows clearly  why the notion of coherence as asymmetry relative to time translation naturally appears in the context of quantum speed limits: because the very notion of speed itself is a measure of asymmetry relative to time-translation.  Note that the speed of evolution can, however, increase (unboundedly) under what Ref.~\cite{Coh_Plenio} termed \emph{incoherent operations},
 and so the notion of coherence studied by Baumgratz et. al.  \cite{Coh_Plenio} does not characterize the  speed of evolution.


Theorem \ref{prop1} leads to a useful framework for understanding and generalizing quantum speed limits. According to this theorem, any  function which can quantify the notion of speed of evolution  should be  non-increasing under TI quantum operations, and hence should be a measure of asymmetry relative to time-translation.
In other words, a quantity which can be increased under TI quantum operations  is not a natural candidate for quantifying the speed of evolution. This suggests that to find tighter quantum speed limits, one should try to find inequalities which can be expressed  in terms of asymmetry measures. Note that Eq.(\ref{inv4}) guarantees that any such function remains constant during the evolution.

Perhaps surprisingly, it turns out that  both the standard  Mandelstam-Tamm and Margolus-Levitin bounds  satisfy this property for pure states. For a pure state $\psi$, these bounds provide an upper bound on the speed of evolution as 
\beq \label{ineq4}
\tau^{-1}_\perp(\psi) \le \frac{2\Delta E(\psi)}{\pi}\ ,  \frac{2\left[{E}_\text{av}(\psi)-E_\text{min}(\psi)\right]}{\pi} . 
\eeq
As we   show in Sec.(\ref{Sec:Gen}) and Sec.(\ref{Levitin}), the right-hand side of both of these bounds are also non-increasing under TI quantum operations.
 Therefore in the case of pure states, inequalities (\ref{ineq4}) can be interpreted as upper bounds on a measure of asymmetry, namely, $\tau^{-1}_\perp(\psi) $, by two other measures of asymmetry, namely, $\Delta E(\psi)$ and ${E}_\text{av}(\psi)-E_\text{min}(\psi)$.  
However, it can be easily shown that for mixed states these functions can, in general, increase under TI quantum operations, and hence they are not  measures of asymmetry. For instance, the operation which maps any quantum state to the completely mixed state is clearly TI. However, for the completely mixed state the  variance of energy can  be arbitrarily large. In this case both Mandelstam-Tamm and Margolus-Levitin  provide very loose bounds; 
they cannot see the fact that the speed of evolution is zero. 

In Sec. (\ref{Sec:Gen}), we find generalizations of the Mandelstam-Tamm bound in which the variance $\Delta E$ is replaced by a genuine measure of time-translation asymmetry. The fact that the upper bound on the speed of evolution is a measure of asymmetry, in particular, guarantees that it vanishes for all incoherent states, including the completely mixed state.

\subsection{Mixing does not increase speed}\label{mixing}

Intuitively, one expects that  the speed of evolution of the mixture of two states is no greater than 
the fastest speed of evolution of  each of them.  
We therefore propose that  any reasonable notion of speed of state evolution should satisfy this property. In other words, if a function $f$ from states to real numbers quantifies the speed of evolution, then it should be \emph{quasi-convex}, meaning that for any $0\le p\le 1$ and for any pair of states $\rho$ and $\sigma$ it should satisfy
\beq
f\big(p\rho+(1-p) \sigma\big)  \le \text{max}\big\{f(\rho) , f(\sigma) \big\} \ .
\eeq
Quasi-convex functions are natural generalizations of convex functions, which satisfy
\beq
f\big(p\rho+(1-p) \sigma\big)  \le p f(\rho)+(1-p) f(\sigma) \ .
\eeq
Note that the monotonicity of speed under mixing  only requires  quasi-convexity of the function, and not the convexity, which is a stronger condition.   

The following proposition asserts that a speed of evolution is automatically quasi-convex if it is defined in terms of a measure of distinguishability that is jointly quasi-convex. 
\begin{proposition}\label{prop:joint}
For any jointly quasi-convex measure of distinguishability $D$, i.e., one satisfying Eq.(\ref{quasi}), and for any $\epsilon>0$, the function $1/\tau^D_\epsilon$, defined via Eq.~(\ref{def:t}), is quasi-convex, i.e., 
for any $0\le p\le 1$, and any pair of states $\rho$ and $\sigma$ it holds that
 \beq
\frac{1}{\tau_{\epsilon}^{D}\left(p\rho+(1-p)\sigma\right)} \le \text{max}\left\{\frac{1}{\tau_{\epsilon}^{D}\left(\rho\right)}\  ,\  \frac{1}{\tau_{\epsilon}^{D}\left(\sigma\right)} \right\}  \ .
\eeq
Equivalently,  the function $\tau^D_\epsilon$ is \emph{quasi-concave}, that is,
\beq
\tau_{\epsilon}^{D}\left(p\rho+(1-p)\sigma\right) \ge \text{min}\left\{\tau_{\epsilon}^{D}\left(\rho\right),\tau_{\epsilon}^{D}\left(\sigma\right) \right\} \ . 
\eeq
\end{proposition}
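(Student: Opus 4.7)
The plan is to exploit a simple but crucial observation: unitary time evolution is affine in the state, so the mixture $\mu \equiv p\rho + (1-p)\sigma$ evolves under $H$ as $\mu(t) = p\rho(t) + (1-p)\sigma(t)$. This linearity is what lets me reduce the quasi-concavity of $\tau^D_\epsilon$ to the joint quasi-convexity hypothesis on $D$ in one stroke; the whole point of passing from full joint convexity down to joint quasi-convexity in Eq.(\ref{quasi}) is to make exactly this reduction work without needing a weighted-sum bound.

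Concretely, I would set $t^\star \equiv \min\{\tau^D_\epsilon(\rho),\tau^D_\epsilon(\sigma)\}$, and for any $t \in (0, t^\star)$ invoke the supremum definition of $\tau^D_\epsilon$ to conclude that both $D(\rho, \rho(t)) < \epsilon$ and $D(\sigma, \sigma(t)) < \epsilon$. Applying Eq.(\ref{quasi}) with mixing weight $p$ to the two pairs $(\rho, \rho(t))$ and $(\sigma, \sigma(t))$ then yields
\[
D(\mu, \mu(t)) \leq \max\{D(\rho, \rho(t)), D(\sigma, \sigma(t))\} < \epsilon.
\]
Since this holds for every $t$ strictly less than $t^\star$, the definition of $\tau^D_\epsilon(\mu)$ as the supremum of such times immediately gives $\tau^D_\epsilon(\mu) \geq t^\star$, which is the quasi-concavity claim. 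The stated quasi-convexity of $1/\tau^D_\epsilon$ follows by passing to reciprocals and using that $\min$ becomes $\max$ under inversion of nonnegative quantities.

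A small amount of bookkeeping handles the case where $\tau^D_\epsilon(\rho)=\infty$ or $\tau^D_\epsilon(\sigma)=\infty$: by definition, an infinite value means $D(\cdot,\cdot(t))<\epsilon$ for all $t>0$, so the bound displayed above continues to hold and produces either the finite lower bound coming from the other state, or $\tau^D_\epsilon(\mu)=\infty$ when both are infinite.

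I do not anticipate any genuine obstacle — the proof is essentially a one-line manipulation once the affine-evolution observation is made. The only subtlety is avoiding the boundary value $t=t^\star$, at which the strict inequality $D(\rho,\rho(t))<\epsilon$ could fail; working throughout with $t<t^\star$ and invoking the supremum at the very end sidesteps this, and also clarifies why a pointwise notion like joint quasi-convexity is already strong enough and full joint convexity is not needed.
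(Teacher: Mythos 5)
Your proof is correct and is essentially the paper's argument run in the contrapositive direction: the paper starts from $t_0=\tau^D_\epsilon(p\rho+(1-p)\sigma)$ and uses Eq.~(\ref{quasi}) to force one of $\tau^D_\epsilon(\rho),\tau^D_\epsilon(\sigma)$ below $t_0$, whereas you start from $t<\min\{\tau^D_\epsilon(\rho),\tau^D_\epsilon(\sigma)\}$ and push the bound onto the mixture; both rest on the same two ingredients, affineness of unitary evolution and joint quasi-convexity applied to the pairs $(\rho,\rho(t))$ and $(\sigma,\sigma(t))$. Your handling of the boundary value and the infinite cases is slightly more explicit than the paper's, but the route is the same.
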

\begin{proof}
Let $$t_0 \equiv \tau^D_\epsilon(p\rho+(1-p)\sigma).$$  
Recall from Eq.~(\ref{def:t}) that $\tau^D_\epsilon(\nu)$ for any state $\nu$ is defined as the minimum time at which $D(\nu,\nu(t))\ge \epsilon$. It follows that
$$D\Big(p\rho+(1-p)\sigma\ ,\  p\rho(t_0)+(1-p)\sigma(t_0)\ \Big)\ge \epsilon.$$
Using the quasi-convexity of $D$, 
this implies that
\begin{align*}
&\text{max}\big\{D\left(\rho, \rho(t_0) \right) , D\left(\sigma , \sigma(t_0) \right) \big\}\\ &\ \ \ \ \ \ \ \ \ \ \ge D\big(p\rho+(1-p)\sigma\ ,\ p\rho(t_0)+(1-p)\sigma(t_0)\ \big)\\ &\ \ \ \ \ \ \ \ \ \ \ge \epsilon\ .
\end{align*}
Given the definitions of $\tau^D_\epsilon(\rho)$ and $\tau^D_\epsilon(\sigma)$, this implies that either
\beq
\tau^D_\epsilon(\rho)\le t_0\,
\eeq
 or 
\beq
\tau^D_\epsilon(\sigma) \le t_0\,
\eeq
or both,  which in turn implies that $t_0 \ge \text{min}\left\{\tau_{\epsilon}^{D}\left(\rho\right),\tau_{\epsilon}^{D}\left(\sigma\right) \right\} $.  Recalling the definition of $t_0$,  this completes the proof. 
\end{proof}\\
Since any function which can quantify speed of evolution is expected to be non-increasing under mixing, it is desirable to find quantum speed limits (upper bounds on the speed of evolution) that respect this property as well. That is,  it is desirable to find upper bounds on the speed $1/\tau_{\epsilon}^{D}$ which are also non-increasing under mixing. Note that the standard  Mandelstam-Tamm bound does not have this property, because the uncertainty $\Delta E$ in general increases under mixing. For instance, by mixing two eigenstates of energy with different energies,  each of which has vanishing $\Delta E$,  we get a state with nonzero $\Delta E$.

Based on theses ideas, in the following we present generalizations of the Mandelstam-Tamm  bound in which the speed $1/\tau^D_\epsilon(\rho)$, instead of  being bounded by $\Delta E(\rho)$, is bounded by measures of time-translation asymmetry. Moreover, the latter are quasi-convex (by virtue of being convex),  and therefore  do not increase under mixing. This  leads to tighter quantum speed limits in the case of mixed states. In particular, these generalizations would imply that for all incoherent states, the speed of evolution is zero. Also, in Sec. (\ref{Levitin}), we discuss the Margolus-Levitin bound and we show that the function  ${E}_\text{av}(\psi)-E_\text{min}(\psi)$, which shows up in this bound, is monotonic under TI quantum operations.

\section{Generalized Mandelstam-Tamm bounds}\label{Sec:Gen}
In this section, we consider two particular examples of measure of distinguishability, namely the trace distance and the Renyi  relative entropy, and we show that they lead to two different generalizations of  the Mandelstam-Tamm bound, both of which reduce to the Mandelstam-Tamm 
 bound in the special case of pure states (up to a constant of order one)  but yield tighter bounds in the case of mixed states.  These generalizations of the Mandelstam-Tamm bound have, roughly speaking, the following interpretations. Note first that $1/\tau^D_\epsilon(\rho)$ can be interpreted as an {\em average} speed of evolution relative to the distinguishability measure $D$.   Using the trace distance as our distinguishability measure, we find that this average speed of evolution is upper-bounded by the \emph{instantaneous} speed of evolution (the first derivative of  the measure).  Using the Renyi relative entropy as our measure, we find that the average speed of evolution is upper-bounded by the \emph{instantaneous acceleration} of the evolution (the second derivative of the measure).


First, consider the trace distance as the measure of distinguishability. For two states $\sigma_1$ and $\sigma_2$, it is given by  
$\|\sigma_1-\sigma_2 \|_1$, where $\|\cdot\|_1$ is the $l_1$-norm. Consider  Eq.~(\ref{def:t}), and let $\tau^{l_1}_\epsilon$ denote the minimum time it takes state $\rho$ to evolve to another state at trace distance $\epsilon$.  Then, it is straightforward  to see that (see appendix)  $\tau^{l_1}_\epsilon(\rho)$ is lower bounded by
\begin{align} \label{QSL_l1}
\tau^{l_1}_\epsilon(\rho) \ge \frac{\epsilon}{F_H(\rho)} . 
\end{align}
where $F_H(\rho)$ is given by Eq.~(\ref{F_H}) and has been previously studied as a measure of asymmetry  \cite{Noether, Marvian_thesis}.
 This function also has a simple interpretation in terms of the speed of evolution. The fact that    
\begin{align}\label{speed}
F_H(\rho)=\left[{\frac{d}{dt} } {\|\rho-\rho(t)\|_1} \right]_{t=0^+} =\lim_{\epsilon\rightarrow 0^+} \frac{\epsilon}{\tau^{l_1}_\epsilon(\rho)}\ ,
\end{align}
implies that $F_H$ can be interpreted as the \emph{instantaneous} speed of evolution according to the trace distance. From this point of view,  the inequality of Eq.~(\ref{QSL_l1}) 
 is simply a bound on the average speed in terms of the instantaneous speed, and both of these notions of speed are measures of asymmetry relative to time translations. As we will discuss later, in the case of pure states, this bound reduces to  the Mandelstam-Tamm bound (up to a missing $\pi$ factor).

As our second example, we consider the  Renyi quantum  relative entropy,  Eq.~(\ref{def}), for $s=1/2.$
Let $\tau_{\epsilon}^{\text{Ren}}(\rho)$ be the minimum time $t$ it takes, under Hamiltonian $H$, for the relative Renyi entropy  $D_{1/2}(\rho, \rho(t))$ to become larger than or equal to $\epsilon$. Equivalently, $\tau_{\epsilon}^{\text{Ren}}(\rho)$   can be defined as the minimum time $t$ such that  $\text{tr}(\sqrt{\rho}\sqrt{\rho(t)})\le e^{-\frac{\epsilon}{2}}$. Then, as we show in the appendix, for any $\epsilon>0$ it holds that 
\begin{align}\label{QSL:Renyi}
\tau_{\epsilon}^{\text{Ren}}(\rho)\ge  \ \frac{ \ \sqrt{1-e^{-\epsilon/2}}}{ \sqrt{S_{H}(\rho)}} \ , \end{align}
where $S_H$
is the Wigner-Yanase skew information, defined in Eq. (\ref{S_H}).

The Wigner-Yanase skew information  has been shown to be a measure of asymmetry \cite{Noether} and has a simple interpretation in the context of quantum speed limits.  Noting that
\begin{equation}\label{accel}
 S_H(\rho)= \frac{1}{4} \left[\frac{d^2}{dt^2} D_\frac{1}{2}\left(\rho,\rho(t)\right) \right]_{{t=0}} = \frac{1}{2}  \lim_{\epsilon\rightarrow 0^+}\frac{\epsilon}{\left(\tau_{\epsilon}^{\text{Ren}}(\rho)\right)^2} 
\end{equation} 
we find that $S_H(\rho)$ can be interpreted as  (one fourth of) the \emph{instantaneous acceleration} of  evolution, relative to the Renyi relative entropy with $s=1/2$, at $t=0$. Since the \emph{instantaneous velocity}, i.e. the first derivative with respect to time, vanishes at $t=0$, from this point of view Eq.(\ref{QSL:Renyi}) is simply a bound on the average speed based on the instantaneous acceleration at $t=0$.

Both  functions $F_H(\rho)$ and $S_H(\rho)$ are zero if and only if  the state $\rho$ is incoherent, i.e., if and only if it commutes with $H$. They both capture the intuition that coherence of the state $\rho$ relative to the $H$ eigenbasis should be quantified by the \emph{noncommutativity} of $\rho$ and $H$, or in the case of $S_H$, the non-commutativity of $\sqrt\rho$ and $H$. Furthermore, they satisfy
\bes\label{measure}
\begin{align}
F_H(\rho)&\le 2 \Delta E(\rho) \ , \\
S_H(\rho)&\le \Delta E^2(\rho)\ ,
\end{align}
\ees
where both inequalities become equalities in the case of pure states. \footnote{One strategy to prove these bounds is the following.  First check them for the case of pure sates, which is straightforward.  Then, for a general mixed state, look at the purification of the state  and use the fact that by tracing over the purifying system,  measures of asymmetry do not increase. The latter monotonicity property follows from the fact that partial trace is a TI quantum operation.
} 

Using the fact that for pure states, the inequalities of Eq.~(\ref{measure}) hold as equalities, together with the fact $F_H$ and $S_H$ are non-increasing under TI quantum operations, we find that if, under a TI quantum operation, a pure state $\psi$ can be transformed to a pure state $\phi,$ then   $\Delta E(\psi)\ge \Delta E(\phi)$. In other words, the energy uncertainty $\Delta E$ is a measure of asymmetry in ``pure to pure" state transformations.

Two states are perfectly distinguishable if and only if  their trace distance is 2, and their relative Renyi entropy is $\infty$. This means that 
\begin{equation}
\tau_\perp(\rho)=\tau_{\infty}^{\text{Ren}}(\rho)=\tau^{l_1}_2(\rho)\ .
\end{equation}
Then, using the inequalities of Eq.~(\ref{measure}), we can see that   both bounds Eq.~(\ref{QSL_l1}) and Eq.~(\ref{QSL:Renyi}) reduce to  the original Mandelstam-Tamm bound, Eq.(\ref{QSL:MT}), up to a factor of $\pi$ in the case of Eq.(\ref{QSL_l1}), and a factor of $\pi/2$ in the case of  Eq.(\ref{QSL:Renyi}), which are irrelevant for any practical purposes.\footnote{The missing factors of $\pi$ and $\pi/2$ are due to the curvature of the space of pure states, which is not taken into account in the simple derivations of our bounds.}  

In the case of mixed states, however, the bounds of Eq.~(\ref{QSL_l1}) and Eq.~(\ref{QSL:Renyi}) can be much more powerful than the standard  Mandelstam-Tamm bound. In particular, unlike the Mandelstam-Tamm bound, these bounds correctly imply that for any incoherent state the speed of evolution is zero.   This is because they both satisfy the criterion we expressed in the previous section: the upper bounds on the speed evolution is a  measure of asymmetry relative to time translation, and so quantifies coherence relative to $H$. 

Another interesting property of the functions $S_H$ and $F_H$    is the fact that they are both convex, i.e., for $0\le p \le 1$ and for any pair of states $\rho$ and $\sigma$
\bes
\begin{align}
F_H\left(p \rho+ (1-p)\sigma \right)&\le  p\ F_H\left(\rho \right)+(1-p) F_H\left(\sigma \right)\\
S_H\left(p \rho+ (1-p)\sigma \right)&\le  p\ S_H\left(\rho \right)+(1-p)S_H\left(\sigma \right).
\end{align}
\ees
One way to see the convexity  of $F_H$ and $S_H$ is to use the fact they are,    respectively, instantaneous speed relative to trace distance, Eq. (\ref{speed}), and instantaneous acceleration  relative to relative Renyi entropy, Eq. (\ref{accel}), and then use the fact that trace distance and relative Renyi entropy are both jointly convex. \footnote{Convexity of skew information $S_H$  was shown originally by Wigner and Yanase  \cite{wigner1963information} and was one of their motivations to interpret the function $-S_H$ as an entropy.} 

As we discussed before, any function of state which quantifies the notion of speed of evolution is expected to be quasi-convex, and functions $F_H$ and $S_H$ have this property, while the  uncertainty function $\Delta E$ in the standard  Mandelstam-Tamm speed limit is not and therefore can  increase under mixing. 

One more appealing property which is satisfied by  Wigner-Yanase skew information $S_H$, but not by $F_H$,
  is additivity. 
Consider two (non-interacting) closed systems, $A$ and $B$, with Hamiltonians $H_A$ and $H_B$. The total Hamiltonian is  given by $H_\text{tot}=H_A\otimes I_B+I_A\otimes H_B$, where $I_A$ and $I_B$ are the identity operators on systems $A$ and $B$ respectively. Then, the Wigner-Yanase skew information is additive for uncorrelated (initial) joint states of $A$ and $B$, i.e.
\begin{align}
S_{H_\text{tot}}(\rho_A\otimes \rho_B)&=S_{H_A}(\rho_A)+S_{H_B}(\rho_B)\ .
\end{align}
Note that  the functions $\Delta E^2(\rho)$ and $E_\text{av}(\rho)-E_\text{min}(\rho)$, which show up in Mandelstam-Tamm and Margolus-Levitin bounds, are also additive.

\section{Margolus-Levitin bound and measures of asymmetry}\label{Levitin}

The next natural step is to understand the role of coherence and measures of asymmetry in the Margolus-Levitin QSL bound, Eq.(\ref{QSL:ML}). In the case of  the Mandelstam-Tamm  bound, Eq.(\ref{QSL:MT}), 
 we saw that the upper bound on the speed of evolution, the energy uncertainty  $\Delta E$, is itself 
a measure of time-translation asymmetry for pure states, which is to say that it is non-increasing in pure to pure state transformations that are achieved using  TI quantum operations. Hence for pure states, the standard Mandelstam-Tamm can be interpreted as an upper bound on a measure of asymmetry, namely, $\tau^{-1}_\perp$, by another measure of asymmetry, namely, $\Delta E$. Does the standard Margolus-Levitin bound have a similar interpretation?

In the following, we show that the answer is affirmative, and the function $E_\text{av}-E_\text{min}$ which shows up in the Margolus-Levitin QSL 
 is non-increasing  in pure to pure state transformations that are achieved using  TI quantum operations. 

Let $A_{\text{min}/\text{max}}(\rho)$ be the difference between $E_\text{av}(\rho)$, the average energy of state $\rho$, and $E_\text{min/max}(\rho)$, the minimum/maximum occupied energy level, i.e.
\bes
\begin{align}
A_{\text{min}}(\rho)\equiv E_\text{av}(\rho)-E_\text{min}(\rho)\ ,\\
A_{\text{max}}(\rho)\equiv E_\text{max}(\rho)-E_\text{av}(\rho)\ .
\end{align}
\ees
Then, one can easily see that:\\
\textbf{(i)} Functions $A_{\text{min/max}}$ are non-negative, i.e. $A_{\text{min/max}}(\rho)\ge 0$\ ,\\
\textbf{(ii)}  For a pair of systems that are non-interacting, which is to say that their total Hamiltonian is of the form $H_1\otimes I_2+I_1\otimes H_2$, 
the functions $A_{\text{min/max}}$ are additive, i.e.,
\beq
A_{\text{min/max}}(\rho_1\otimes\rho_2)=A_{\text{min/max}}(\rho_1)+A_{\text{min/max}}(\rho_2)\ .
\eeq 
\textbf{(iii)} For a pure state $\psi$, $A_{\text{min/max}}(\psi)$ is zero if (and only if) $\psi$ is invariant under time-translation, i.e., an eigenstate of the Hamiltonian.

Using these properties one can easily prove the following result:
\begin{theorem}\label{prop:Energy}
If there exists a  TI quantum operation under which a pure state $\psi$ evolves to a pure state $\phi$, then 
\beq
A_{\text{min/max}}(\phi)\le A_{\text{min/max}}(\psi)\ . 
\eeq
\end{theorem}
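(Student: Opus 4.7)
The plan is to use the structural characterization of TI operations given earlier in this section: every TI operation $\mathcal{E}_{\mathrm{TI}}$ admits a dilation $\mathcal{E}_{\mathrm{TI}}(\rho)=\mathrm{tr}_{\mathrm{env}}[U(\rho\otimes\sigma_{\mathrm{env}})U^\dagger]$ in which $\sigma_{\mathrm{env}}$ is incoherent in the eigenbasis of the environment Hamiltonian $H_{\mathrm{env}}$ and $U$ is energy-conserving, i.e.\ $[U,H\otimes I+I\otimes H_{\mathrm{env}}]=0$. Given such a dilation realizing the transition $\psi\mapsto\phi$ between pure states, the rest of the proof reduces to extracting a convenient product structure and then doing elementary energy bookkeeping using properties (i)--(iii).

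First I would unpack the dilation. Write $\sigma_{\mathrm{env}}=\sum_k p_k |e_k\rangle\langle e_k|$ in the eigenbasis of $H_{\mathrm{env}}$. The joint output is the convex mixture $\sum_k p_k\, U(|\psi\rangle\langle\psi|\otimes |e_k\rangle\langle e_k|)U^\dagger$, and its partial trace over the environment equals $|\phi\rangle\langle\phi|$. Because pure states are extreme points of the state space, every $k$ with $p_k>0$ must yield the same pure state $|\phi\rangle\langle\phi|$ upon partial tracing; hence $U(|\psi\rangle\otimes|e_k\rangle)=|\phi\rangle\otimes|f_k\rangle$ for some pure environment state $|f_k\rangle$ (possibly depending on $k$).

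Next I would exploit energy conservation. Since $U$ commutes with $H_{\mathrm{tot}}=H\otimes I+I\otimes H_{\mathrm{env}}$, it preserves every $H_{\mathrm{tot}}$-eigenspace. Consequently $|\psi\rangle\otimes|e_k\rangle$ and $|\phi\rangle\otimes|f_k\rangle=U(|\psi\rangle\otimes|e_k\rangle)$ have identical distributions over the eigenvalues of $H_{\mathrm{tot}}$, which forces their average energy, minimum-occupied energy level, and maximum-occupied energy level to coincide. Therefore
\beq
A_{\min/\max}(\phi\otimes f_k)=A_{\min/\max}(\psi\otimes e_k).
\eeq

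Finally I would apply the three listed properties. Additivity (ii) rewrites the preceding equation as $A_{\min/\max}(\phi)+A_{\min/\max}(f_k)=A_{\min/\max}(\psi)+A_{\min/\max}(e_k)$. Property (iii) gives $A_{\min/\max}(e_k)=0$ because $|e_k\rangle$ is an $H_{\mathrm{env}}$-eigenstate, and non-negativity (i) gives $A_{\min/\max}(f_k)\ge 0$. Combining these yields $A_{\min/\max}(\phi)\le A_{\min/\max}(\psi)$, as claimed. The only delicate step is the extreme-point argument isolating the factorization $U(|\psi\rangle\otimes|e_k\rangle)=|\phi\rangle\otimes|f_k\rangle$; everything else is routine bookkeeping.
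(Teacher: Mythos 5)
Your proof is correct and follows essentially the same route as the paper: an energy-conserving Stinespring dilation, the observation that purity of the output forces a product form $U(|\psi\rangle\otimes|e_k\rangle)=|\phi\rangle\otimes|f_k\rangle$, preservation of the total-energy distribution, and then additivity plus non-negativity plus property (iii) to conclude. The only (harmless) difference is that you start from a dilation with a mixed incoherent environment and reduce to the pure case via an extreme-point argument, whereas the paper's dilation theorem already supplies the environment in a single energy eigenstate $|E_0\rangle$, making that step unnecessary.
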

So, it follows that functions $A_{\text{min}}$ and $A_{\text{max}}$ are measures of asymmetry when restricted to pure states (according to Definition \ref{def1}). Note, however, that these functions can increase under TI quantum operations when evaluated on mixed states and thus they are not genuine measures of asymmetry.

 To prove this theorem, we use properties (i), (ii) and (iii) of the functions $A_{\text{min/max}}$.  We also make use of a version of Stinespring's dilation theorem, which  implies  that any symmetric quantum operation can be implemented using symmetric unitaries and symmetric pure states \cite{keyl1999optimal, Marvian_thesis}. More formally, it asserts that any TI quantum  operation $\mathcal{E}_\text{TI}$  (see Eq~(\ref{TI_op})),  can be implemented by coupling the system to an ancillary system, or \emph{environment}, with Hamiltonian $H_\text{env}$,  via a unitary $V_\text{TI}$ such that 
\begin{equation}
\mathcal{E}_\text{TI}(\rho)=\text{tr}_\text{env}\left(V_\text{TI} [\rho\otimes |E_0\rangle\langle E_0|] V_\text{TI}^\dag \right)\ ,
\end{equation}
where  the environment is initially in an eigenstate $|E_0\rangle$ of its Hamiltonian $H_\text{env}$, and the unitary $V_\text{TI}$ which couples the system and environment is an energy conserving unitary, i.e. 
\beq\label{cons}
\big[V_\text{TI}\ ,\ H\otimes I_\text{env}+I_\text{sys}\otimes H_\text{env}\big]=0\ .
\eeq

Suppose the TI quantum operation $\mathcal{E}_\text{TI}$ transforms the pure state $\psi$ to  the pure state $\phi$, and consider the Stinespring dilation of this operation. 
Given that the reduced state of the system must be $\phi$ at the end and given that evolution in the dilation is unitary, the joint state of the system and the environment at the end must be a pure product state, 
\begin{equation}
V_\text{TI} (|\psi\rangle\otimes |E_0\rangle)=|\phi\rangle\otimes |\theta_\text{env}\rangle
\end{equation}
where $|\theta_\text{env}\rangle$ is a pure state of the environment.

Next, we use the fact that $V_\text{TI}$ is an energy conserving unitary, i.e., it satisfies Eq.(\ref{cons}).  This implies that the energy distribution of the joint state of system and environment does not change after evolution, which in turn implies
\begin{equation}
A_\text{min/max} \big(|\psi\rangle\otimes |E_0\rangle\big)=A_\text{min/max}\big(|\phi\rangle\otimes |\theta_\text{env}\rangle\big)\ .
\end{equation}
Then, using the additivity of the functions $A_\text{min/max}$, we find
\begin{align}
A_\text{min/max} (\psi)&+A_\text{min/max} (|E_0\rangle) \ \  \nonumber \\ &=A_\text{min/max}(\phi)+A_\text{min/max}(\theta_\text{env})\ .
\end{align}
 But since $|E_0\rangle$ is an eigenstate of energy, we have $A_\text{min/max} (|E_0\rangle)=0$. Furthermore, using the fact $A_\text{min/max}$ is non-negative,  we have $A_\text{min/max}(\theta_\text{env})\ge 0$, and so
\begin{equation}
A_\text{min/max} (\psi)\ge A_\text{min/max} (\phi) \ .
\end{equation}
 This complete the proof. 
 
 Note that all the properties {(i)}, {(ii)} and {(iii)} of $A_\text{min/max} $ are also satisfied by the variance of the energy, $\Delta^2 E$, and consequently, using essentially the same argument, one can show that the variance is also non-increasing for pure to pure state transformations that are achieved by TI quantum opertations \cite{gour2008resource,MS11}.
 
Therefore, we have found that, just as for the Mandelstam-Tamm  QSL bound, the standard Margolus-Levitin QSL bound can  be interpreted as an upper bound on the speed of evolution, which is one measure of time-translation asymmetry, by a function that is also a measure of time-translation asymmetry in the case of pure states.

Finally, note that in the Margolus-Levitin QSL, we can replace $E_\text{av}(\psi)-E_\text{min}(\psi)$ by $E_\text{max}(\psi)-E_\text{av}(\psi)$, and the bound still holds, i.e. 
\begin{equation}
\tau^{-1}_\perp(\rho) \le \frac{2}{\pi} \left[E_\text{max}(\rho)-{E}_\text{av}(\rho)\right]  \ .
\end{equation}
This can be shown, for instance, by transforming  $H\rightarrow -H$ in the original bound.  This bound, however, is less useful in practice, because while physical Hamiltonians are bounded from below,  in general they do not have a bounded largest energy.

\section{Conclusion} 

In this paper, we discussed two different approaches for quantifying coherence which sometimes have been confused with each  other.  In the first approach, one considers  coherence as asymmetry relative to a group of translations such as time-translations or phase-shifts \cite{Modes, Noether, lostaglio2015quantum, lostaglio2015description}, whereas in the second approach, one considers coherence as the resource defined by incoherent operations  \cite{Coh_Plenio}. We have shown that only the first approach for quantifying  coherence is relevant in the context of quantum speed limits. This notion of coherence has also been shown to be relevant in the context of quantum thermodynamics \cite{lostaglio2015quantum, lostaglio2015description}, quantum metrology and quantum reference frames \cite{QRF_BRS_07, Marvian_thesis, Modes, MS11, GMS09}. 
We also showed that measures of coherence in the sense defined by  Baumgratz et. al. \cite{Coh_Plenio}, are a proper subset of measures of asymmetry. In particular, the Wigner-Yanase skew information is a measure of asymmetry which is not a measure of coherence based on the definition of Baumgratz et. al. \cite{Coh_Plenio}.

The notion of coherence as asymmetry relative to a group of translations naturally shows up in the context of quantum speed limits because the  speed of evolution is itself a measure of asymmetry relative to time-translations. 
This means that any function over states that can capture the notion of the speed of evolution of states should also be a measure of asymmetry.  Indeed one expects that a tight quantum speed limit  should bound the  speed of evolution with other measures of asymmetry. We have shown that the standard Mandelstam-Tamm   and Margolus-Levitin bounds satisfy this criterion in the case of pure states. Inspired by this intuition, we have found extensions of the Mandelstam-Tamm bound in which the speed of evolution is upper-bounded by genuine measures of asymmetry, such as the Wigner-Yanase skew information, which leads to significantly stronger bounds in the case of mixed states. A natural open question for future research is whether a similar goal can be achieved for the case of the Margolus-Levitin bound.


\textbf{Note:} During the last stages of preparing this manuscript,  we became aware of a  related work \cite{mondal2016quantum} recently posted on arXiv which uses a nice geometric argument to derive an extension of Mandelstam-Tamm bound based on Wigner-Yanase skew information.  This bound is basically  equivalent to one of  our bounds, i.e. Eq.(\ref{QSL:Renyi}), up to a factor of $\pi/2$.  Also, after posting the first version of this paper on arXiv, we became aware of another recent arXiv paper \cite{pires2015generalized}, which studies generalized geometric quantum speed limits based on the Petz contractive metrics, including  the Wigner-Yanase skew information. 

 \section{Acknowledgements}
We acknowledge helpful discussions with Gilad Gour and Barry Sanders. 
IM acknowledges support from grants ARO W911NF-12-1-0541 and NSF CCF-1254119. PZ acknowledges partial support from  ARO MURI Grant No. W911NF-11-1-0268.  RWS acknowledges the Perimeter Institute, which is supported by the Government of Canada through Industry Canada and by the Province of Ontario through the Ministry of Research and Innovation.
\\

 \bibliography{Ref_v12}

\appendix

\section{Proofs of bounds (\ref{QSL_l1}) and (\ref{QSL:Renyi}) and  proposition \ref{prop_incoh} }

\subsection{Proof of Inequality \ref{QSL_l1}}  

This follows from the fact that
\bes
\begin{align}
\left\| e^{-i H t}\rho e^{i H t}-\rho\right\|_1 &=\left\|\int_0^t ds\  \frac{d}{ds}\left( e^{-i H s}\rho e^{i H s}\right) \right\|_1\nonumber \\ &\le  \int_0^t ds \left\| \frac{d}{ds} (e^{-i H s}\rho e^{i H s}) \right\|_1
\nonumber \\ &= \int_0^t ds \left\| [H, e^{-i H s}\rho e^{i H s}] \right\|_1 \nonumber\\ &=   \int_0^t ds \left\| [H, \rho ] \right\|_1\nonumber\\ &= t \left\| [H, \rho ] \right\|_1\ ,
\end{align}
\ees
where to get the second line we have used the triangle inequality. So, if at time $t$ it holds that $\left\| e^{-i H t}\rho e^{i H t}-\rho\right\|_1=\epsilon $ then
\beq
\epsilon=\left\| e^{-i H t}\rho e^{i H t}-\rho\right\|_1  \le t  \left\| [H, \rho ] \right\|_1
\eeq
which proves bound \ref{QSL_l1}.
\subsection{Proof of Inequality \ref{QSL:Renyi}}  
The Renyi relative  entropy of $\rho$ and $\rho(t)=e^{-i H t} \rho e^{i Ht}$, for $s=1/2$ is given by 
\begin{align}
D_{1/2}(\rho,\rho(t))&=-2 \log \text{tr}\left(\sqrt{\rho}\sqrt{\rho(t)}\right)\nonumber \\ &=-2 \log \text{tr}\left(\sqrt{\rho}e^{-i H t}\sqrt{\rho}e^{i Ht}\right)
\end{align}
So $D_{1/2}(\rho,\rho(t))\ge\epsilon$ implies
\begin{align}
e^{-\frac{\epsilon}{2}}&\ge\text{tr}\left(\sqrt{\rho}\sqrt{e^{-i H t} \rho e^{i Ht}}\right) \nonumber\\
&=\text{tr}\left(\sqrt{\rho}e^{-i H t} \sqrt\rho e^{i Ht}\right)\nonumber\\ 
&=1+\int_0^t dr_1 \int_0^{r_1} dr_2\ \frac{\partial^2 }{\partial r^2_2} \text{tr}(\sqrt{\rho} e^{-i H r_2}\sqrt{\rho} e^{i H r_2}) \ ,
\end{align}
where to get the last line we have used the fact that  at $t=0$, first derivative of $\text{tr}\left(\sqrt{\rho}e^{-i H t} \sqrt\rho e^{i Ht}\right)$ with respect to $t$ vanishes, and so 
\beq
 \Big[\frac{\partial }{\partial r} \text{tr}(\sqrt{\rho} e^{-i H r}\sqrt{\rho} e^{i H r})\Big]_{r=t}=\int_0^t  dr\frac{\partial^2 }{\partial r^2} \text{tr}(\sqrt{\rho} e^{-i H r}\sqrt{\rho} e^{i H r}) .
\eeq 
So, we find
\begin{align}
&1-e^{-{\epsilon}/{2}}\le 1-\text{tr}\left(\sqrt{\rho}\ e^{-i H t} \sqrt\rho\ e^{i Ht}\right)\nonumber\\ &=-\int_0^t dr_1 \int_0^{r_1} dr_2\ \frac{\partial^2 }{\partial r^2_2}\ \text{tr}(\sqrt\rho e^{-i H r_2}\sqrt\rho e^{i H r_2})\nonumber \\ &\le\int_0^t dr_1 \int_0^{r_1} dr_2 \left|\frac{\partial^2 }{\partial r^2_2} \text{tr}(\sqrt\rho e^{-i H r_2}\sqrt\rho e^{i H r_2})\right|\nonumber  \\ &\le\int_0^t dr_1 \int_0^{r_1} dr_2 \ \  \underset{r_2\in [0,t]}{\max}\left|\frac{\partial^2 }{\partial r^2_2} \text{tr}(\sqrt\rho e^{-i H r_2}\sqrt\rho e^{i H r_2})\right|\nonumber  \\ &\le \frac{t^2}{2} \times \underset{r\in [0,t]}{\max}\left|\frac{\partial^2 }{\partial r^2} \text{tr}(\sqrt\rho e^{-i H r} \sqrt\rho e^{i H r})\right|\label{Eq-last2}\ ,
\end{align}
where to get the third line we have used the triangle inequality. Next, note that
\bes
\begin{align}
&\left|\frac{\partial^2 }{\partial r^2} \text{tr}(\sqrt\rho e^{-i H r} \sqrt\rho e^{i H r})\right|\nonumber\\ &=\left|\text{tr}(\sqrt\rho e^{-i H r} [H , [H,\sqrt\rho] ] e^{i H r})\right|\nonumber\\ &=\left|\text{tr}([H,\sqrt\rho] e^{-i H r}  [H,\sqrt\rho] e^{i H r})\right|\nonumber \\
&\le -\text{tr}\left([H,\sqrt\rho]  [H,\sqrt\rho] \right) \label{Sch}\\
&=2{S_{H}(\rho)}\ \label{Eq-last},
\end{align}
\ees
where to get Eq.(\ref{Sch}) we have used Cauchy-Schwartz inequality. Combining Eq.(\ref{Eq-last}) and Eq.(\ref{Eq-last2}) we find
\begin{align}
1-e^{-{\epsilon}/{2}}\le  t^2\ S_{H}(\rho)\ ,
\end{align}
which completes the proof.

\subsection{Proof of proposition \ref{prop_incoh}}

Recall that according to Baumgratz et. al. \cite{Coh_Plenio}, incoherent operations are quantum operations for which a Kraus decomposition $\mathcal{E}(\cdot)=\sum_\mu K_\mu(\cdot) K^\dag_\mu $ exists such that for each Kraus operator $K_\mu$ and any \emph{incoherent state} $\rho$,  $K_\mu\rho K^\dag_\mu/\text{tr}(K_\mu\rho K^\dag_\mu)$  is also an incoherent state \cite{Coh_Plenio}. We assume incoherent states are  states which are diagonal in the eigenbasis of the observable $H$. 

As we show in the following,  any TI quantum operation $\mathcal{E}_\TI$, i.e. any quantum operation satisfying $e^{-i H t}\mathcal{E}_{\text{TI}}(\cdot) e^{i H t}=\mathcal{E}_\text{TI}\left(e^{-i H t}\cdot e^{i H t}\right)$ for all $t\in\mathbb{R}$, has a Kraus decomposition as $\mathcal{E}_\TI(\cdot)=\sum_\mu K_\mu(\cdot)K^\dag_\mu$,  where each Kraus operator $K_\mu$ satisfies
\beq\label{Kraus}
e^{-i H t} K_\mu e^{i H t}= e^{i\omega_\mu t} K_\mu\ ,\ \ \ \forall t\in\mathbb{R}
\eeq
for a real number $\omega_\mu$. Assuming this equation, it is straightforward to show that $\mathcal{E}_\TI$ is an incoherent operation in the sense of  Baumgratz et. al. \cite{Coh_Plenio}: For any incoherent state $\rho_\TI\in \mathcal{I}_H$ and  for any $t\in\mathbb{R}$ it holds that
\bes
\begin{align}
K_\mu \rho_\TI K^\dag_\mu&= K_\mu \big(e^{-i Ht }\rho_\TI e^{i Ht }\big) K^\dag_\mu\\ &= e^{-i Ht }\big( K_\mu \rho_\TI  K^\dag_\mu \big)e^{i Ht }\ ,
\end{align}
\ees
where the fist equality follows from the fact that $\rho_\TI$ is incoherent, and the second equality follows from Eq.(\ref{Kraus}).
 Since this holds for all $t\in\mathbb{R}$, it follows that $K_\mu \rho_\TI  K^\dag_\mu$ commutes with $H$, and hence is incoherent in $H$ eigenbasis. Thus  $K_\mu \rho_\TI  K^\dag_\mu/\text{tr}(K_\mu \rho_\TI  K^\dag_\mu)$ is an incoherent state. Since this  holds for arbitrary incoherent state $\rho_\TI$, it follows that $\mathcal{E}_\TI$ is an incoherent operation according to the definition of Baumgratz et. al. \cite{Coh_Plenio}.  
 
 Thus to complete the proof we only need to show Eq.(\ref{Kraus}). This equation is indeed a special case of lemma 1 of  \cite{gour2008resource}. For completeness, here we present a different proof of this fact based on the Steinspring representation of symmetric operations  \cite{keyl1999optimal}, which we also used in section \ref{Levitin}.
 
Any TI quantum  operation $\mathcal{E}_\text{TI}$  can be implemented by coupling the system to an ancillary system, or \emph{environment}   via a unitary $V_\text{TI}$ such that 
\begin{equation}
\mathcal{E}_\text{TI}(\rho)=\text{tr}_\text{env}\left(V_\text{TI} [\rho\otimes |E_0\rangle\langle E_0|] V_\text{TI}^\dag \right)\ ,
\end{equation}
where  the environment is initially in an eigenstate $|E_0\rangle$ of   $H_\text{env}$ with eigenvalue $E_0$, and the unitary $V_\text{TI}$ which couples the system to the environment satisfies 
\beq\label{cons2}
[V_\text{TI}, H\otimes I_\text{env}+I_\text{sys}\otimes H_\text{env}]=0\ .
\eeq
Let $\{|E_l\rangle\}$ be the orthonormal set of eigenvectors of $H_\text{env}$, such that $H_\text{env} |E_l\rangle=E_l|E_l\rangle$ (To simplify the notation we assume there is no degeneracy). Then a Kraus decomposition of $\mathcal{E}_\text{TI}$ is given by $\mathcal{E}_\text{TI}(\cdot)=\sum_l K_l(\cdot)K_l^\dag$, where
\beq
K_l= \langle E_l|V_\text{TI}|E_0\rangle\ .
\eeq
It can be easily seen that for any Kraus operator $K_\mu$ it holds that
\begin{align}
e^{-i H t} K_l e^{i H t}&= e^{-i H t} \langle E_l |V_\text{TI}|E_0\rangle e^{i H t}\nonumber\\ &=e^{-i H t} e^{i E_l t}  \langle E_l |e^{-i H_\text{env} t} V_\text{TI}|E_0\rangle e^{i H t} 
\nonumber\\ &= e^{i E_l t}  \langle E_l | \big(e^{-i H t}\otimes e^{-i H_\text{env} t} \big) V_\text{TI}|E_0\rangle e^{i H t} 
\nonumber\\ &= e^{i E_l t}  \langle E_l | V_\text{TI}  \big(e^{-i H t}\otimes e^{-i H_\text{env} t} \big) |E_0\rangle  e^{i H t} 
\nonumber\\ &= e^{i (E_l-E_0) t}  \langle E_l | V_\text{TI}   |E_0\rangle \nonumber\\ 
&= e^{i (E_l-E_0) t} K_l\  .
\end{align}
It follows that any TI operation $\mathcal{E}_\text{TI}$ has a Kraus decomposition satisfying Eq.(\ref{Kraus}).  
(This argument also provides an interpretation of constants $\omega_\mu$ in Eq.(\ref{Kraus}): In the case where the generator $H$ is the system Hamiltonian, and  $\mathcal{E}_\text{TI}$ is invariant under time translation, constant  $\omega_\mu$ is the energy transferred  from the  environment to the system, given that the process corresponding to Kraus operator $K_\mu$ has happened.)

\end{document}